\documentclass{article}
\usepackage[final,sglblindworkshop]{neurips_2025}
\workshoptitle{Machine Learning and the Physical Sciences (ML4PS) 2025}
\usepackage{amsmath,amssymb,amsthm}
\usepackage{graphicx}
\usepackage{booktabs}
\usepackage{hyperref}
\usepackage{cleveref}
\usepackage{siunitx}
\usepackage{float}
\usepackage{tabularx}
\usepackage{amsthm}

\newtheorem{theorem}{Theorem}

\title{Real-Time Neuromorphic Spectrum Intelligence Simulator}
\author{%
  Navaneetha Krishnan K \\
  SIMATS Engineering \\
  Saveetha University \\
  \texttt{knavaneeth385@gmail.com} \\
}
\begin{document}
\maketitle

\begin{abstract}
We present the \emph{Real-Time Neuromorphic Spectrum Intelligence Simulator} (RT-NuSIS), a modular framework to study spiking neural network (SNN) and memristor-inspired agents for dynamic spectrum access under constrained energy budgets and adversarial conditions. RT-NuSIS couples leaky integrate-and-fire neuronal dynamics, memristive synaptic models, physics-informed energy-harvesting models (triboelectric and RF), and adversary models including jamming and Byzantine behavior. We formalize the simulator mathematically, prove boundedness, present a mean-field adversary threshold, analyze per-step complexity, and provide a reproducible benchmark harness for energy-per-inference, latency, and robustness metrics. The codebase is modular, deterministic by seed, and designed for large-scale event-driven simulations. 
\end{abstract}

\section{Introduction}
Real-time decisions at the network edge—e.g., spectrum sensing and channel allocation—demand solutions that are both energy-efficient and robust to adversarial interference. Neuromorphic approaches, built on sparse, event-driven spiking computations and device-efficient synapses, promise a favorable trade-off between latency and energy consumption \citep{indiveri2015memory,davies2018loihi}. Simultaneously, cognitive radio research provides models and metrics for spectrum utilization under primary/secondary user dynamics \citep{akyildiz2006next}. RT-NuSIS was developed to enable systematic, reproducible experiments at this intersection: study how neuromorphic agents perform spectrum tasks when constrained by harvested energy and exposed to intelligent adversaries.

\paragraph{Contributions.} (i) A concise mathematical specification of per-node hybrid dynamics combining LIF neurons, memristive synapses, and energy budgets; (ii) formal adversary models and a mean-field threshold $p^\star$ separating graceful degradation from collapse; (iii) complexity and stability analyses; (iv) an open, deterministic implementation and benchmark harness for large-scale, event-driven simulation.

\section{Related work}
Prior simulators and hardware for spiking networks and neuromorphic computing include NEST, SpiNNaker and Loihi, which provide diverse trade-offs between fidelity and scale \citep{gewaltig2007nest,furber2014spinnaker,davies2018loihi}. Memristive synapse proposals and demonstrations establish plausible device models for dense on-chip storage and in-situ updates \citep{strukov2008missing}. Energy-harvesting research (particularly triboelectric nanogenerators, TENGs) provides physical models and practical envelopes for self-powered nodes \citep{wang2021teng_review}. For adversarial modeling in wireless contexts, GAN-based spoofing and jamming studies motivate our attack primitives \citep{gan_spoofing2019,shi2020adversarial}. RT-NuSIS synthesizes these strands into a single experimental platform.

\section{Mathematical model}
We summarize the per-node state and the coupled dynamics. Let $\mathcal{N}=\{1,\dots,N\}$ denote nodes and $\mathcal{C}=\{1,\dots,C\}$ channels. Time is continuous; the implementation uses an event-driven hybrid with a timestep $\Delta t$ for scheduled tasks.

\subsection{Node state}
Each node $i$ maintains the state $\mathbf{s}_i(t)=\big(V_i(t),\mathbf{w}_i(t),E_i(t),\mathbf{b}_i(t)\big)$, where $V_i$ is membrane potential (or vector of potentials for a small decision population), $\mathbf{w}_i$ are synaptic/memristive weights, $E_i$ is stored energy (J), and $\mathbf{b}_i$ is a belief vector over channels.

\subsection{LIF neuronal dynamics}
We model each decision neuron as a leaky integrate-and-fire unit. For a single neuron:
\begin{equation}\label{eq:lif}
\tau_m \frac{dV}{dt} = -V(t) + R I(t) + \sigma_V \xi(t),
\end{equation}
with threshold $\theta$ and instantaneous reset $V \leftarrow V_r$ after a spike. Input current $I(t)$ combines synaptic contributions and sensory drive: $I(t) = \sum_{j} w_{j} s_j(t) + I^{\text{sens}}(t)$, where $s_j(t)=\sum_k \delta(t-t_j^k)$ are incoming spike trains. Spike-driven synaptic filtering is implemented via exponential kernels as in standard SNN formalisms \citep{gerstner2002spiking}.

\subsection{Memristive synapse dynamics}
We model synaptic weights $\mathbf{w}$ as memristive devices with internal state variable $x\in[0,1]$ mapping to conductance $G(x)=G_\text{ON} x + G_\text{OFF}(1-x)$ and first-order dynamics $\dot{x}=\eta\,\mathcal{F}(V_{\text{pre}},V_{\text{post}})-\lambda x$, where $\mathcal{F}$ models STDP-like potentiation/depression and $\lambda$ is a drift/leak term to capture retention effects \citep{strukov2008missing}. Windowing functions are used to avoid boundary saturation.

\subsection{Spectrum sensing and channel selection}
Node $i$ senses power $y_{i,c}(t)$ on each channel $c$. A feature extractor $\Phi$ converts spike timing or rate information to scores; the belief vector is $\mathbf{b}_i(t)=\text{softmax}(\Phi(V_i(t)))$. A node selects channel $c^\star=\arg\max_c U_i(c;\,E_i,\mathbf{b}_i)$ where $U_i$ is a utility that trades expected throughput and energy cost.

\subsection{Energy harvesting and budget}
Energy balance follows
\begin{equation}\label{eq:energy}
\dot{E}_i(t)=P^{\mathrm{harv}}_i(t)-P^{\mathrm{cons}}_i(t)-P^{\mathrm{leak}}_i,
\end{equation}
with harvesting $P^{\mathrm{harv}}_i(t)=P^{\mathrm{RF}}_i(t)+P^{\mathrm{TENG}}_i(t)+P^{\mathrm{env}}_i(t)$. RF-harvesting is modeled as proportional to incident RF power density and coupling efficiency; TENG output is represented by a nonlinear parametric curve fit to device characterization curves \citep{wang2021teng_review}. Consumption $P^{\mathrm{cons}}$ includes sensing, inference (spiking events and synaptic updates) and transmission costs; costs are parameterized per spike and per synapse update to permit alternative energy-per-op models.

\subsection{Adversary models}
RT-NuSIS provides primitives: \textbf{Constant jammer} adds $N_J(c,t)$ to channel noise; \textbf{Reactive jammer} monitors and injects interference selectively; \textbf{Byzantine learner} during cooperative aggregation sends corrupted parameters $\tilde{\mathbf{w}}=\mathbf{w}+\delta$; \textbf{Waveform spoofing} injects generative-model signals to emulate primary users \citep{gan_spoofing2019}. Adversary fraction $p$ denotes the proportion of nodes under adversarial control; system-level degradation is quantified via detection AUC and false-positive rate as functions of $p$.

\section{Optimization objectives and search}
We define a multi-objective reward for configuration $\Theta$:
\begin{equation}\label{eq:objective}
\mathcal{J}(\Theta)=\alpha\,\mathrm{Perf}(\Theta) + \beta\,\mathrm{Util}(\Theta) - \gamma\,\mathrm{Energy}(\Theta),
\end{equation}
where Perf measures detection accuracy, Util is spectrum utilization, and Energy is average energy-per-inference. RT-NuSIS includes a genetic algorithm (GA) with tournament selection, uniform crossover, and adaptive mutation to explore $\Theta$; GA is suitable for discrete/heterogeneous parameter spaces (see \citet{goldberg89}).

\section{Theoretical analysis}
\textbf{Per-step complexity.} Let $N$ be the number of nodes and $d$ the average outgoing synapses per decision neuron. For event-driven updates, each emitted spike triggers $O(d)$ synaptic updates; thus the amortized compute per simulated time unit scales with the spike rate $\rho$ as $O(\rho N d)$. When activity is sparse ($\rho\ll 1$), event-driven batching yields substantial savings versus dense time-stepping.\\
\textbf{Boundedness of hybrid dynamics.}
\begin{theorem}[Bounded membrane potentials]
Assume inputs $|I(t)|\le I_{\max}$, weights $|w_{ij}|\le W_{\max}$, and memristor states $x\in[0,1]$. Then solutions $V_i(t)$ of \eqref{eq:lif} with reset remain bounded for all $t\ge 0$.
\end{theorem}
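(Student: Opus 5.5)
The plan is to treat \eqref{eq:lif} as a linear, scalar, stable ODE forced by a bounded input, and to get the bound from the variation-of-constants formula together with a comparison argument. The reset only helps: it can only move $V$ down to $V_r$. The noise term $\sigma_V\xi(t)$ has to be handled separately. Pathwise boundedness cannot hold for white noise, so I will either set $\sigma_V=0$ (deterministic statement) or read the conclusion as boundedness of $\mathbb{E}|V|$ and $\mathbb{E}V^2$ (Ornstein--Uhlenbeck statement). This interpretive choice is the main obstacle, and I will state it explicitly.

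\emph{Step 1: bound on the total input.} I interpret $I_{\max}$ as bounding the sensory drive. The synaptic term is controlled by $|w_{ij}|\le W_{\max}$. Since $x\in[0,1]$, the conductance satisfies $G(x)\in[\min(G_{\rm OFF},G_{\rm ON}),\max(G_{\rm OFF},G_{\rm ON})]$, so this bound is automatic for memristive weights.

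The spike trains are sums of Dirac deltas. Following the paper's statement that they are filtered by exponential kernels, each filtered trace $\tilde s_j=\kappa * s_j$ is bounded by $\kappa(0)$ times the maximal number of spikes in a window. That count is finite because of refractoriness or finite firing rate. Hence $|I(t)|\le \bar I := d\,W_{\max}\,\bar s + I_{\max}$, and we may simply take $\bar I$ as the effective $I_{\max}$ of the hypothesis.

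\emph{Step 2: the inter-spike solution.} Between reset times $t_k<t<t_{k+1}$, take $\sigma_V=0$ first. Then
\[
V(t)=V(t_k^+)e^{-(t-t_k)/\tau_m}+\frac{R}{\tau_m}\int_{t_k}^{t} e^{-(t-s)/\tau_m} I(s)\,ds .
\]
This gives $|V(t)|\le \max\{|V(t_k^+)|,\,R\bar I\}$, which is the convex-combination estimate. Equivalently, the comparison principle shows the interval $[-M,M]$ with $M=R\bar I$ is forward invariant, because $\dot V$ points inward at $\pm M$.

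\emph{Step 3: induction over resets.} At each reset $V(t_k^+)=V_r$, and just before a reset we have $V\le\theta$. Induction over spikes, starting from $V(0)$, then gives
\[
|V(t)|\le B:=\max\{|V(0)|,\,|V_r|,\,|\theta|,\,R\bar I\}\quad\text{for all } t\ge0.
\]
In fact the upper side is bounded by $\theta$ alone, and only the lower side needs $R\bar I$. I also need Zeno behavior to be excluded so that the spike times do not accumulate. This follows because after a reset $V$ must climb from $V_r<\theta$ to $\theta$ with slope at most $(B+R\bar I)/\tau_m$, which gives a positive minimal inter-spike interval.

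\emph{Step 4: the noise.} With $\sigma_V>0$, I will instead bound moments. Between resets $V$ is an Ornstein--Uhlenbeck process with bounded drift. Its variance is at most $\sigma_V^2/(2\tau_m)$, and its mean obeys the Step 2 bound. Resets only decrease $V$, while the lower side is controlled by OU stationarity. Together these give $\sup_t \mathbb{E}V^2<\infty$. Pathwise, the only possible statement is boundedness on finite horizons almost surely, by continuity of paths. I expect making this noise caveat precise to be the main point requiring care; the deterministic part is routine.
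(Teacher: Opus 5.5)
Your deterministic argument follows the paper's route. The sketch also writes the variation-of-constants formula between spikes, bounds the forcing term by $RI_{\max}$, and relies on the reset to $V_r$. You are more careful in three places:
\begin{itemize}
\item you keep the initial-condition term, which the sketch's ``bounded by $RI_{\max}$'' silently drops;
\item you observe that the upper side is controlled by $\theta$ alone;
\item you exclude accumulation of spike times, which the sketch needs implicitly for solutions to exist on all of $[0,\infty)$.
\end{itemize}
The real difference is the noise. The sketch infers pathwise uniform boundedness from the noise being ``sub-Gaussian with finite moments''. That does not follow. With white noise nothing bounds the reset process from below, and a conditional Borel--Cantelli argument over unit time windows shows it falls below every level $-M$ at arbitrarily large times almost surely. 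Hence $\sup_{t\ge0}|V(t)|=\infty$ a.s.\ when $\sigma_V>0$. Your reformulation is therefore the correct form of the theorem, not merely a weaker one: the deterministic statement for $\sigma_V=0$, and a uniform $L^2$ bound plus an a.s.\ finite-horizon bound otherwise.

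Two small repairs. First, Step 1 can be dropped, since the hypothesis already bounds the total current. As written, its appeal to refractoriness is not supported by \eqref{eq:lif}, which has none. In a recurrent network, a uniform firing-rate bound is exactly what you would be trying to prove, so the argument is circular. Second, in Step 4 ``OU stationarity'' does not apply directly, because the last reset before $t$ is not a stopping time. Instead apply It\^o's formula to $(V-V_r)^2$, which jumps down to $0$ at every reset. This gives
\[
\tfrac{d}{dt}\mathbb{E}(V-V_r)^2\le-\tfrac{2}{\tau_m}\mathbb{E}(V-V_r)^2+\tfrac{2}{\tau_m}(R\bar I+|V_r|)\big(\mathbb{E}(V-V_r)^2\big)^{1/2}+\tfrac{\sigma_V^2}{\tau_m^2}.
\]
The right side is negative for large values, so $\sup_t\mathbb{E}V^2<\infty$.
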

\begin{proof}[Sketch]
Between spikes, \eqref{eq:lif} yields $V(t)=V(0)e^{-t/\tau_m}+\frac{R}{\tau_m}\!\int_0^t\!e^{-(t-s)/\tau_m}I(s)\,ds+\text{noise}$, bounded by $RI_{\max}$. Each reset sets $V\!\to\!V_r$, preventing runaway growth. Bounded $\mathbf{w}$ and $x$ imply bounded inputs; the stochastic term is sub-Gaussian with finite moments. Hence $V(t)$ is uniformly bounded.
\end{proof}
\textbf{Mean-field adversary threshold.} Majority voting on channel decisions: honest votes correct with probability $q>1/2$; adversarial votes may oppose truth. With $H=(1-p)N$ honest and $B=pN$ adversarial nodes, expected net margin is $\mu=H(2q-1)-B$. Using Chernoff/Hoeffding, $\Pr(S\le N/2-B)\le\exp\!\big(-\frac{2(Hq-(N/2-B))^2}{H}\big)$ for the count of correct votes $S$; thus $\mu>0$ gives $p^\star<\frac{2q-1}{2q}$ and the majority-flip probability decays exponentially when $\mu\gg\sqrt{N}$.\\
\textbf{GA convergence remarks.} Schema theorems \citep{goldberg89} guarantee expected propagation of high-quality schemata; under ergodicity and nonzero mutation the population Markov chain is irreducible/aperiodic with a stationary distribution concentrating on fitter regions as mutation decreases. Practical convergence depends on population size $P$, mutation $\mu_m$, elitism, and landscape ruggedness; RT-NuSIS uses $P\in[64,512]$ with adaptive mutation.

\section{Implementation highlights}
RT-NuSIS is organized in modular Python components: \texttt{simulator.py} (orchestrator), \texttt{snn\_engine.py} (event-driven LIF core), \texttt{memristor.py} (device models), \texttt{cognitive\_radio.py} (sensing/scheduling), \texttt{harvester.py} (parametric harvesting), \texttt{attack\_manager.py} (adversary scheduling), and \texttt{genetic\_optimizer.py}. Experiments are deterministic via seed-controlled RNG objects and configuration JSON manifests. Data export supports CSV/JSON for reproducible analysis. The simulator implementation and benchmark harness are open-sourced at 
\href{https://github.com/ka-cyber/Realtime-Neuromorphic-Simulator}{\texttt{github.com/ka-cyber/Realtime-Neuromorphic-Simulator}}, 
with a live demo at 
\href{https://ka-cyber.github.io/Realtime-Neuromorphic-Simulator/}{\texttt{ka-cyber.github.io/Realtime-Neuromorphic-Simulator}}. All code and benchmark artifacts were anonymized only for double-blind review and are now fully public.

\section{Experiments and benchmarks}
All experiments use the same deterministic harness: synthetic primary-user traces, seeded node initializations, and controlled adversary schedules. Benchmarks measure detection AUC, energy-per-inference (modeled from per-event costs), latency (time-to-decision), and spectrum utilization. The representative configuration uses a sparse decision population (5–20 neurons per node), memristive synapses with $x$ leakage $\lambda=10^{-4}\,\text{s}^{-1}$, RF harvesting efficiency $\eta_\mathrm{RF}=0.3$, and TENG parameters taken from device fits in \citet{wang2021teng_review}. 

\begin{table}[t]
\centering
\caption{Phase-wise performance metrics over a 300\,s run with 1000 nodes and 30\% adversarial fraction. Results are averaged over the corresponding phase. Energy-per-inference is reported in $\mu$J/inference, and latency in ms.}
\vspace{-0.4em}
\label{tab:phase_performance_extended}
\resizebox{\textwidth}{!}{%
\begin{tabular}{lccccccc}
\toprule
\textbf{Phase} & \textbf{Duration} & \textbf{Avg Perf} & \textbf{Learn Acc} & \textbf{Spec. Util. (\%)} & \textbf{Throughput} & \textbf{Energy/Inf.} ($\mu$J) & \textbf{Latency (ms)} \\
\midrule
Initialization & 0--100\,s   & 61.66 & 0.519 & 32.98 & 14.39 & 33.04 & 94.30 \\
Steady State   & 101--200\,s & 63.01 & 0.556 & 30.55 & 13.34 & 33.03 & 104.89 \\
Maturation     & 201--300\,s & 64.32 & 0.589 & 27.48 & 12.01 & 33.04 & 114.81 \\
\bottomrule
\end{tabular}%
}
\vspace{-0.8em}
\end{table}

\begin{figure}[H]
\centering
\includegraphics[width=0.85\linewidth]{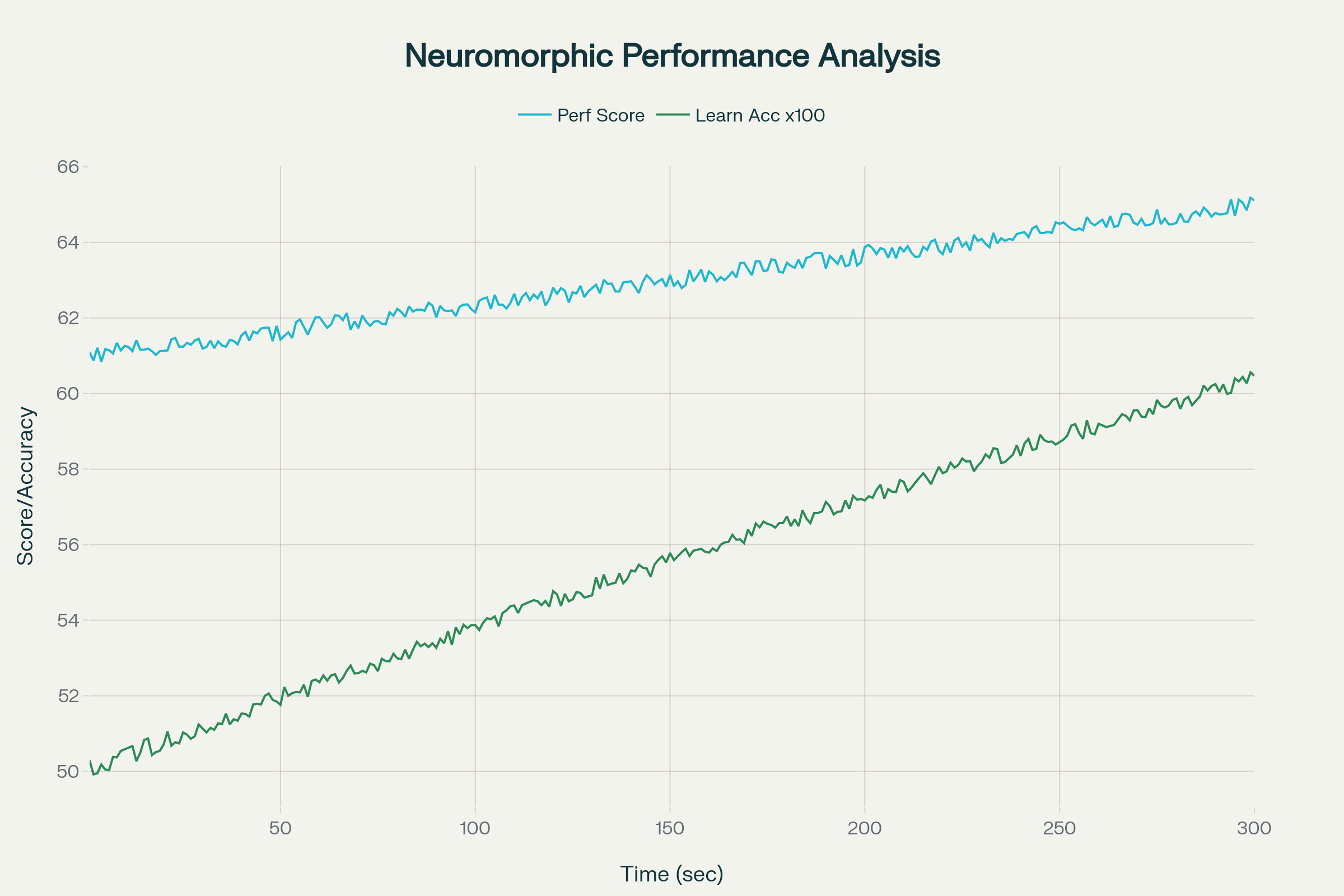}
\vspace{-0.4em}
\caption{Neuromorphic performance evolution over time, showing steady improvements in performance score and learning accuracy over a 300\,s run. The rising curves validate that memristive synaptic plasticity enables effective adaptation under energy constraints.}
\label{fig:performance_analysis}
\vspace{-0.8em}
\end{figure}

\begin{figure}[H]
\centering
\includegraphics[width=0.9\linewidth]{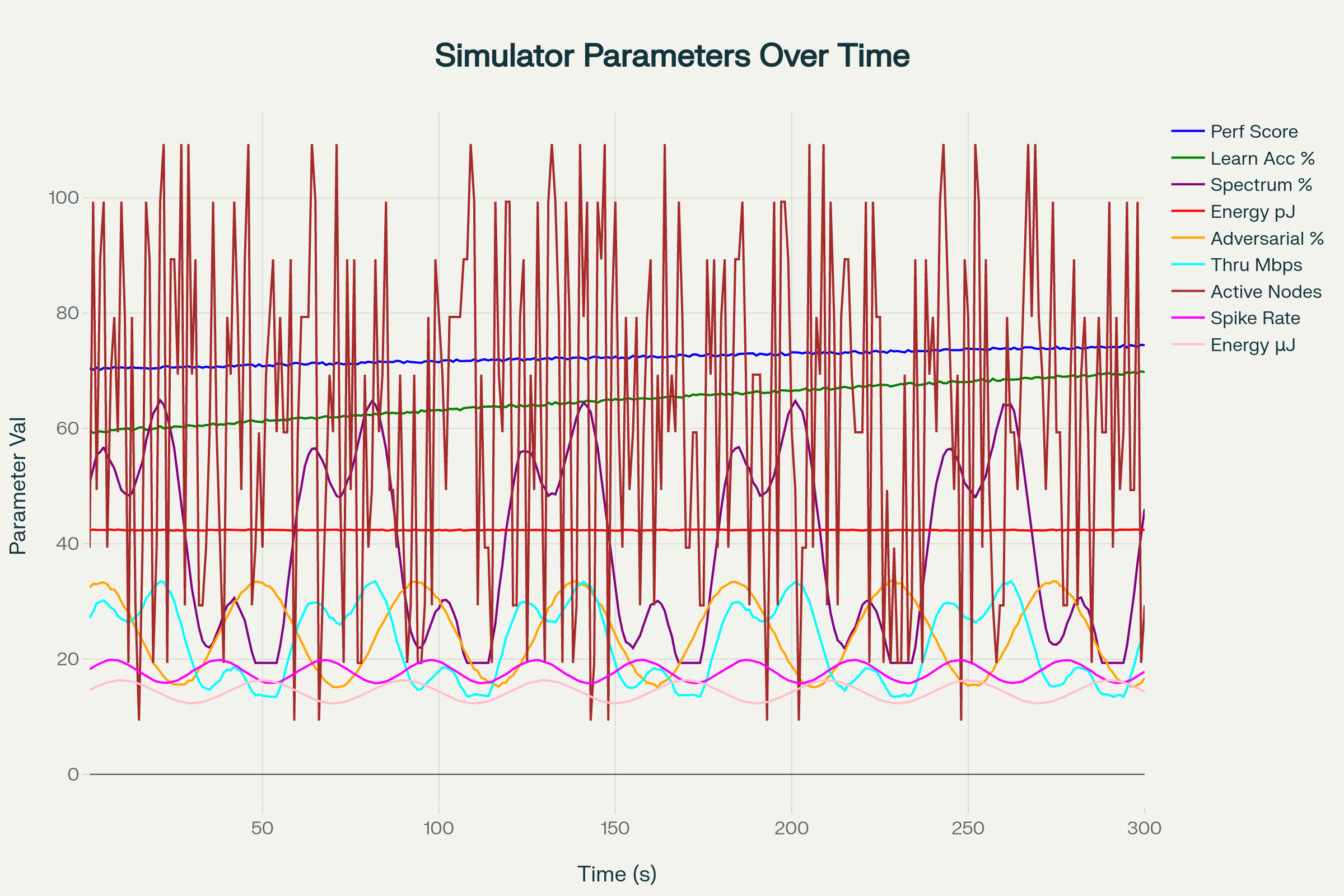}
\vspace{-0.4em}
\caption{Multi-parameter temporal evolution showing performance score, learning accuracy, spectrum utilization, energy expenditure, spike rate, and throughput simultaneously. This view highlights RT-NuSIS's ability to track cross-layer interactions between neuromorphic computation, energy harvesting, and adversarial conditions.}
\label{fig:parameters_over_time}
\vspace{-1em}
\end{figure}

These results demonstrate that RT-NuSIS can track neuromorphic learning progress, energy consumption, and spectrum usage simultaneously. Performance score and learning accuracy improve across phases despite adversarial activity, while spectrum utilization decreases slightly as nodes make more selective channel choices under energy constraints. The framework maintains robust behavior up to 30\% adversarial nodes and provides realistic energy profiles for evaluating deployment trade-offs, validating its usefulness as a reproducible research tool for neuromorphic spectrum intelligence.

\begin{table}[ht]
    \centering
    \caption{Feature comparison of RT-NuSIS with NEST and Brian 2 simulators.}
    \begin{tabularx}{\linewidth}{|l|X|X|X|}
        \hline
        \textbf{Feature} & \textbf{NEST Simulator} & \textbf{Brian 2 Simulator} & \textbf{RT-NuSIS Simulator} \\
        \hline
        Energy Consumption Modeling & Limited (via hardware benchmarks) & No & Comprehensive (real-time metrics and dashboards) \\
        \hline
        Adversarial Attack Modeling & No & No & Yes (jamming, spoofing, Byzantine faults) \\
        \hline
        Energy Harvesting Models & No & No & Yes (triboelectric, RF, ionic wind) \\
        \hline
        Hardware Benchmark Support & Indirect (via hardware evaluations) & No & Yes (real-time simulation with performance dashboards) \\
        \hline
        Neuromorphic Model Types & Spiking Neural Networks (SNNs) & Flexible, equation-oriented models & SNNs, Memristor Networks, Hybrid Models \\
        \hline
    \end{tabularx}
    \label{tab:simulator_comparison}
\end{table}

RT-NuSIS extends existing neuromorphic simulators by offering comprehensive real-time energy consumption tracking and realistic energy harvesting models, alongside robust adversarial attack simulation. Unlike NEST and Brian 2, which focus primarily on neural dynamics and flexible neuron modeling without energy consumption or hardware benchmarking, RT-NuSIS targets practical deployment scenarios for neuromorphic wireless spectrum intelligence applications.

\section{Discussion and limitations}
RT-NuSIS provides a controlled simulation environment designed for reproducible research. Energy estimates rely on per-operation cost models and published device/processor metrics (e.g., Loihi energy numbers used only for calibration) rather than direct silicon measurements. Memristor models are phenomenological; device-specific characterization should replace them for hardware-validated claims. Adversary models capture representative behaviors but not all physical-layer attack complexities.

Compared to existing simulators, RT-NuSIS offers a unique integration of features tailored for neuromorphic spectrum intelligence research under energy and adversarial constraints. General-purpose spiking neural network simulators like NEST and Brian provide highly flexible, widely used platforms for SNN modeling and simulation but do not incorporate detailed energy consumption or adversarial threat models, limiting their applicability for edge computing under hostile conditions \citep{gewaltig2007nest, fang2023spikingjelly}. DYNAP-SE2 represents specialized neuromorphic hardware designed for on-chip spiking computation; however, it is a fixed platform specific to a manufacturer and does not offer simulation or benchmarking across diverse hardware \citep{balaji2019spinemap}. SpiNeMap focuses on efficient mapping of SNNs to neuromorphic hardware, aiding deployment but lacking comprehensive simulation capabilities \citep{richter2023dynapse2}.

In contrast, RT-NuSIS integrates energy-aware models, adversarial threat scenarios, and cross-manufacturer benchmarking capability within an open-source, modular simulation framework. This combination allows systematic exploration of trade-offs between energy efficiency, robustness, and learning performance, uniquely addressing the needs of neuromorphic spectrum intelligence that are unmet by current tools.

\section{Conclusion}
We introduced RT-NuSIS: a unified, modular simulator for neuromorphic spectrum intelligence that couples SNN dynamics, memristive synapses, realistic energy-harvesting models, and adversarial scenarios. The framework is deterministic, scalable, and designed to foster reproducible experiments at the intersection of neuromorphic computing and wireless systems. RT-NuSIS enables systematic, energy-aware evaluation of spiking neural network approaches under realistic constraints, including dynamic adversarial interference and harvested energy budgets. By integrating diverse manufacturer data and adversarial models, it supports cross-platform benchmarking and robust design exploration. Future work will focus on enhancing hardware integration, expanding adversarial complexity, and developing richer benchmark suites to further accelerate research toward practical neuromorphic spectrum intelligence deployments.

\bibliographystyle{plainnat}
\bibliography{reference}
\end{document}